\newtheorem{lemma}{Lemma}
\newtheorem{theorem}{Theorem}
\newtheorem{remark}{Remark}
\newtheorem{definition}{Definition}
\newtheorem{example}{Example}
\begin{document}
\title{Universal  approach to deterministic spatial search via alternating quantum walks}
\author{Qingwen Wang}%
\author{Ying Jiang}%
\author{Shiguang Feng}%
\author{Lvzhou Li}%
\email{Email: lilvzh@mail.sysu.edu.cn}
\affiliation{School of Computer Science and Engineering, Sun Yat-sen University, Guangzhou 510006, China}
	
\begin{abstract}
Spatial search is an important problem in quantum computation, which aims to find a marked vertex on a graph. We propose a novel  approach for designing deterministic quantum search algorithms on a variety of graphs via alternating quantum walks. Our approach is universal because it does not require an instance-specific analysis for different graphs. We highlight the flexibility of our approach by proving that for Johnson graphs, rook graphs, complete-square graphs and complete bipartite graphs, our quantum algorithms can find the marked vertex with $100\%$ success probability and achieve quadratic speedups over classical algorithms. This not only gives an alternative succinct way to prove the existing results, but also leads to new interesting findings on more general graphs.
		
\end{abstract}
	
\maketitle
	
\textit{Introduction}.---
 The continuous-time quantum walk (CTQW)  was introduced by Farhi and Gutmann~\cite{RN1}  in 1998, and has been one of the key components in quantum computation ~\cite{Childs2002,Ambainis2003QUANTUMWA,Farhi2008,Childs2010,Reichardt2012,PhysRevA.91.062304,RN3,Dadras2019,Wang2020,Delvecchio2020,Kadian2021,Silva2023}. 
In 2004, Childs and Goldstone~\cite{RN2} presented an algorithmic framework using  CTQW to solve the spatial search problem which aims to find an unknown marked vertex on an underlying graph of specified topology. They showed that the algorithm has $O(\sqrt{N})$ searching time on complete graphs, hypercubes, and $d$-dimensional periodic lattices for $d>4$, where $N$ is the number of vertices in the graph.
In this framework, a Hamiltonian $H$ is constructed using the adjacency matrix of the graph and information about the location of the marked vertex. The algorithm is then to make a quantum system evolve $T$ time from an initial state under the Hamiltonian $H$, where $T$ can be set arbitrarily.
 Since then, many kinds of graphs, e.g., strong regular graphs~\cite{Janmark2014}, complete bipartite graphs~\cite{RN10}, balanced trees~\cite{RN3}, and Johnson graphs~\cite{RN4}, have been studied in this framework. All of these graphs can admit a quadratic quantum speedup by using CTQW. Specially, it is worth mentioning that an exponential algorithmic speedup can be achieved via CTQW for the welded tree problem~\cite{10.1145/780542.780552}.
    
Recently, an interesting framework called \textit{alternating phase-walk} for spatial search was proposed and applied on a variety of graphs \cite{RN5,RN6,RN8}. In the framework, CTQWs and marked-vertex phase shifts are alternately performed.  More specifically, two Hamiltonians are constructed: One uses the Laplacian matrix or adjacency matrix of the graph and the other uses the information of the marked vertex. 
Then the quantum system evolves alternately under the two Hamiltonians, which is similar to the quantum approximate optimization algorithm (QAOA).
Marsh and Wang  \cite{RN5} first utilized alternating phase-walks to design a deterministic quantum algorithm for spatial search on the class of complete identity interdependent networks (CIINs), which achieves quadratic speedups over classical algorithms. 
Note that a CINN is equivalent to an $n\times 2$ rook graph, and the general case of
an $n \times m$ rook graph was studied by Chakraborty et al.~\cite{chakraborty2020optimality} in
the context of the framework proposed by Childs and Goldstone~\cite{RN2}.

 Later,  Marsh and Wang \cite{RN6} presented a method based on alternating phase-walks for designing quantum spatial search algorithms on periodic graphs, and  by applying this method  they obtained quantum search algorithms with quadratic speedups on Johnson graphs $J(n, 2)$, rook graphs and complete-square graphs.
 However, it is worth pointing out that all the quantum search algorithms given in~\cite{RN6} are not deterministic, or in other words, have a certain probability of failure.
In fact, deterministic spatial search algorithms not only mean we can improve the theoretical success
probability to $100\%$ but also imply a kind of perfect state transfer between two vertices on graphs~\cite{godsil2012state, PhysRevA.90.012331}. Hence, this drives Ref.~\cite{RN6} to propose 
the open problem ``another compelling direction for future research is making the algorithm deterministic''.
Inspired by the above work,  Qu et al.~\cite{RN8} proposed a deterministic quantum spatial search on star graphs via the alternating phase-walk framework, which achieves the well-known lower bound of Grover’s search. 
Again,  how to fully characterize the class of
graphs that permit deterministic search was proposed as a topic of future study in~\cite{RN8}. 
 
In this article, we present a novel and universal approach based on alternating phase-walks to design deterministic quantum spatial search algorithms on a variety of graphs. Taking advantage of the approach, we obtain deterministic quantum search algorithms on Johnson graphs $J(n,k)$ for any fixed $k$, rook graphs, complete-square graphs, and complete bipartite graphs $K(N_1, N_2)$, respectively. All of these algorithms can find the marked vertex with 100\% success probability theoretically and achieve quadratic speedups over classical ones. 
 Our approach is universal because it does not require an instance-specific analysis for different graphs. The results obtained in the paper not only subsume the consequences of~\cite{RN5,RN6,RN8} but also generalize to more graphs. On one hand, the authors of~\cite{RN5} only studied a subset of rook graphs, and the star graph considered in~\cite{RN8} is obviously a special case of complete bipartite graphs. Hence, their results can be obtained as direct conclusions of this work.  On the other hand, in contrast to the algorithms with errors in~\cite{RN6}, our algorithms are fully deterministic. Also, we significantly simplify the proof and generalize from Johnson graphs $J(n,2)$ to $J(n,k)$ for any fixed $k$. 

 The main contribution of this paper is that we introduce a more succinct and efficient formalism of alternating phase-walks based on which a universal framework for deterministic quantum spatial search on a large family of graph classes is proposed.
 The contribution made should be wide of interest. 
  This not only solves the open problem stated in~\cite{RN6} ``another compelling direction for future research is making the algorithm deterministic'',  but also may give inspiration for inventing potential techniques that are applicable to more graph classes.
    
	
	
	\textit{Preliminaries}.--- 
For quantum search in an unstructured database, one frequently-used way is to perform alternately the 
two unitary operators 
\begin{align*}
    U_1(\alpha) & =I-(1-e^{-i\alpha})|s\rangle\langle s|,\\
    U_2(\beta) & =I-(1-e^{-i\beta})|m\rangle\langle m|,
\end{align*}
where $\alpha$ and $\beta$ are real numbers, on the initial state $|s\rangle$ to find the marked element $|m\rangle$.
In~\cite{RN9,RN7}, the authors showed that if the value of $|\langle m|s\rangle|$ is known, then we can choose $\alpha=\pi$ and appropriate values for $\beta$ to carry out the search deterministically.
\begin{lemma}[\cite{RN9,RN7}]\label{l1}
	Given two unitary operators $U_1(\pi)$, $U_2(\beta)$, and a positive number $|\langle m|s\rangle|$, where $U_1(\pi) =I-2|s\rangle\langle s|$, $U_2(\beta)=I-(1-e^{-i\beta})|m\rangle\langle m|$, we can find an integer $p \in O(\frac{1}{|\langle m|s\rangle|})$, and real numbers $\gamma$, $\beta_1,\dots,\beta_p$
	such that 
	\[|m\rangle=e^{-i\gamma}\prod_{k=1}^{p}U_1(\pi) U_2(\beta_k)|s\rangle.\]
\end{lemma}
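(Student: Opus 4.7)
My plan is to reduce the problem to the two-dimensional invariant subspace spanned by $|m\rangle$ and $|s\rangle$ and then carry out a phase-matching construction in $U(2)$.

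First, let $|m^\perp\rangle$ be the unit vector in $\mathrm{span}\{|s\rangle,|m\rangle\}$ orthogonal to $|m\rangle$; by absorbing an overall phase one may write $|s\rangle=\sin\theta\,|m\rangle+\cos\theta\,|m^\perp\rangle$ with $\sin\theta=|\langle m|s\rangle|$. Both operators preserve this subspace, and in the ordered basis $\{|m\rangle,|m^\perp\rangle\}$ they have the explicit forms
\[
U_2(\beta)=\begin{pmatrix}e^{-i\beta}&0\\0&1\end{pmatrix},\qquad
U_1(\pi)=\begin{pmatrix}\cos 2\theta&-\sin 2\theta\\-\sin 2\theta&-\cos 2\theta\end{pmatrix}.
\]
Hence every product $\prod_{k=1}^{p}U_1(\pi)U_2(\beta_k)$ lies in $U(2)$, and the task becomes that of steering a qubit state from $|s\rangle$ to $|m\rangle$ by a sequence of such two-parameter gates.

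Next, following the fixed-point / phase-matching philosophy used in the references cited in Lemma~\ref{l1}, I would take $p=\lceil\pi/(4\theta)\rceil$, which is clearly $O(1/|\langle m|s\rangle|)$, and set $\beta_1=\cdots=\beta_p=\beta$. A direct computation shows that $U_1(\pi)U_2(\beta)$ factors as a global phase times a rotation by an effective angle $2\phi=2\phi(\theta,\beta)$ on the Bloch sphere of this $2$D subspace. The deterministic condition $(U_1(\pi)U_2(\beta))^{p}|s\rangle\propto|m\rangle$ then splits into an amplitude condition $(2p+1)\phi\equiv\pi/2\pmod{\pi}$, which fixes $\phi$, and a phase-matching condition relating $\beta$ to $\phi$ and $\theta$, which, once the first is solved, yields a closed-form $\beta$. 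The leftover unit-modulus factor is absorbed into $e^{-i\gamma}$, with $\gamma$ read off from the $(m,m)$ entry of $(U_1(\pi)U_2(\beta))^{p}$.

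The main obstacle is verifying that the phase-matching equation has a real solution $\beta\in[0,2\pi)$ for the chosen $p$. This boils down to showing that $\beta\mapsto\phi(\theta,\beta)$ is continuous and sweeps a range containing $\pi/(4p+2)$; a monotonicity argument on the diagonal entries of $U_1(\pi)U_2(\beta)$, combined with the inequality $p\ge\pi/(4\theta)-1/2$ built into the choice of $p$, furnishes exactly such a $\beta$. Everything else is a routine verification.
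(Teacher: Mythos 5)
The paper never proves this lemma---it is imported wholesale from the cited references---so the only question is whether your argument stands on its own. Your reduction to the invariant subspace $\mathrm{span}\{|m\rangle,|m^{\perp}\rangle\}$ and the matrix forms of $U_1(\pi)$ and $U_2(\beta)$ are correct, but the central step, ``set $\beta_1=\cdots=\beta_p=\beta$ and solve a phase-matching equation,'' fails. Up to a global phase, $G(\beta)=U_1(\pi)U_2(\beta)$ is a rotation of the Bloch sphere of this qubit about an axis $\hat n(\beta)$, and a direct computation (write $e^{-i(\pi-\beta)/2}G(\beta)=\cos\phi\, I-i\sin\phi\,\hat n\cdot\vec\sigma$ with $\cos\phi=-\cos2\theta\sin(\beta/2)$) gives
\begin{equation*}
\sin\phi\;(\hat n\cdot\vec m)=\cos2\theta\cos(\beta/2),\qquad \sin\phi\;(\hat n\cdot\vec s)=-\cos(\beta/2),
\end{equation*}
where $\vec m=(0,0,1)$ and $\vec s=(\sin2\theta,0,-\cos2\theta)$ are the Bloch vectors of $|m\rangle$ and $|s\rangle$. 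Any power of a single rotation preserves the component along its axis, so $G(\beta)^{p}|s\rangle\propto|m\rangle$ requires $\hat n\cdot\vec s=\hat n\cdot\vec m$, i.e.\ $\cos(\beta/2)\,(1+\cos2\theta)=0$, which forces $\beta=\pi$. That is plain Grover, which is exact only when $(2p+1)\theta=\pi/2$ for an integer $p$---false for generic $\theta=\arcsin|\langle m|s\rangle|$. A concrete check: for $\sin\theta=1/\sqrt3$ and $p=1$, killing the $|m^{\perp}\rangle$ amplitude needs $e^{-i\beta}=-\cos2\theta/(1-\cos2\theta)=-1/2$, which is not unimodular.

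The underlying issue is a parameter count: with $p$ fixed you have one free real number $\beta$ but two real constraints (the axis condition above and the rotation-angle condition), so the uniform-phase ansatz is over-determined. This is exactly why the constructions in the cited works use a genuinely non-uniform sequence $\beta_1,\dots,\beta_p$ (at least two distinct values, e.g.\ phase sequences of quantum-signal-processing type, or standard Grover steps followed by specially tuned final phases); the existence of such a sequence with $p=O(1/|\langle m|s\rangle|)$ is the actual content of the lemma and is not recovered by your argument. Your choice $p=\lceil\pi/(4\theta)\rceil$ and the resulting $O(1/|\langle m|s\rangle|)$ bound are the right order of magnitude, but the existence claim needs the multi-phase construction, not a single matched $\beta$.
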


Let $G=(V, E)$ be a graph where $V$ is the vertex set and  $E$ is the edge set.  Denote ${\cal H}=span\{|v\rangle: v\in V\}$. The continuous-time quantum walk on $G$ starts from the initial state $|\psi(0)\rangle\in {\cal H}$ and evolves by the following Schr\"{o}dinger equation:
\begin{equation}\label{e2}
	i\cdot \frac{d\langle v|\psi(t)\rangle}{dt}=\sum_{u\in V}\langle v|H|u\rangle \\ \langle u|\psi(t)\rangle,
\end{equation}
where $|\psi(t)\rangle$ denotes the state at time $t$, $H$ is a Hamiltonian satisfying $\langle v|H|u\rangle=0$ when $v$ and $u$ are not adjacent in $G$. This equation means that at time $t$, the change of the amplitude of $|v\rangle$ is only related to the amplitudes of its adjacent vertices. From \eqref{e2}, we see that the continuous-time quantum walk over a graph $G$ at time $t$ can be defined by the unitary transformation $U=e^{-iHt}$. One choice of $H$ is the Laplacian matrix $L=D-A$, where $D$ is the degree matrix (a diagonal matrix with $D_{jj}=deg(j)$) and $A$ is the adjacency matrix of $G$. The other choice is to let $H = A$. 

In this article, when we mention a graph, it always refers to a simple undirected connected graph, where ``simple'' means the graph has no loops and has no  multiple edges between any two vertices. Below we give some properties of the  Laplacian matrix $L$.
	\begin{lemma}\label{lemma2}
		Let $G$ be a graph with Laplacian matrix  $L=D-A$, where $D$ is the degree matrix and $A$ is the adjacency matrix of $G$. Then we have the follow properties.
		(\romannumeral 1) $0$ is a simple\footnote{An eigenvalue is said to be simple if its   algebraic multiplicity  is 1.} eigenvalue of $L$, and   the  corresponding eigenvector is $r= (1,1,\dots,1)^T$.
		(\romannumeral 2) Given the  spectral decomposition $L=\sum_{i=1}^{N} \lambda_i |\eta_i \rangle \langle \eta_i|$, $\langle v|\eta_i \rangle$ is a real number for each vertex $v$ of $G$ and each $i\in \{1,2,\dots,N\}$. 
	\end{lemma}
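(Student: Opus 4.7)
\textbf{Proof plan for Lemma \ref{lemma2}.}

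For part (i), the plan is to exploit the well-known quadratic form identity
\[
\langle x|L|x\rangle \;=\; \sum_{\{u,v\}\in E} \bigl(\langle u|x\rangle - \langle v|x\rangle\bigr)^{2},
\]
which follows directly from the definition $L=D-A$ by expanding $\sum_v \deg(v)|\langle v|x\rangle|^{2}$ as a sum over incident edges. This identity immediately shows that $L$ is positive semidefinite and that $\langle x|L|x\rangle = 0$ iff $\langle u|x\rangle = \langle v|x\rangle$ for every edge $\{u,v\}$, i.e., iff $x$ is constant on each connected component of $G$. Since we have stipulated that $G$ is connected, the kernel of $L$ is one-dimensional and spanned by the all-ones vector $r=(1,1,\dots,1)^{T}$; a direct check that each row of $L$ sums to zero confirms $Lr=0$. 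Hence $0$ is a simple eigenvalue with eigenvector $r$.

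For part (ii), I will appeal to the spectral theorem for real symmetric matrices. Since $D$ is a real diagonal matrix and $A$ is the (real, symmetric) adjacency matrix of an undirected simple graph, $L=D-A$ is real and symmetric. Therefore an orthonormal eigenbasis $\{|\eta_i\rangle\}_{i=1}^{N}$ may be chosen inside $\mathbb{R}^{N}$: for each eigenvalue $\lambda_i$, the eigenspace $\ker(L-\lambda_i I)$ is the kernel of a real matrix and hence admits a real basis, and Gram--Schmidt orthonormalization over $\mathbb{R}$ preserves reality. With this choice, $\langle v|\eta_i\rangle\in\mathbb{R}$ for every vertex $v$ and every $i$.

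There is no real obstacle: both parts are standard algebraic graph theory facts, and the only subtlety to flag is that the spectral decomposition in (ii) is not unique when eigenvalues repeat, so the statement should be read as ``there exists a choice of eigenbasis in which the coefficients are real,'' which is precisely what the above argument establishes. The hardest bookkeeping step is merely verifying the edge-sum formula for $\langle x|L|x\rangle$, which is a one-line expansion.
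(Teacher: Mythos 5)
Your proof is correct. For part (ii) you argue exactly as the paper does: $L=D-A$ is real symmetric, so a real orthonormal eigenbasis exists; your extra remark about non-uniqueness of the spectral decomposition when eigenvalues repeat is a reasonable clarification of how the statement should be read. The only genuine difference is in part (i): the paper simply cites a standard spectral graph theory reference, whereas you supply the self-contained argument via the quadratic form identity $\langle x|L|x\rangle=\sum_{\{u,v\}\in E}(\langle u|x\rangle-\langle v|x\rangle)^2$, positive semidefiniteness, and the observation that a vector in the kernel must be constant on each connected component — which, combined with the paper's standing assumption that all graphs are connected, gives a one-dimensional kernel spanned by the all-ones vector. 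This buys a proof readable without consulting the reference, at the cost of a few lines; both routes rely on connectivity, which you correctly invoke. One cosmetic point: your edge-sum identity as written with plain squares is for real $x$; for a complex argument one should write $|\langle u|x\rangle-\langle v|x\rangle|^2$, though this does not affect the conclusion since the kernel of a real symmetric matrix is the complexification of its real kernel.
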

		\begin{proof}
			The property (\romannumeral 1) is a direct conclusion in \cite{spectral}.
				Since $D$ and $A$ are both real and symmetric, $L$ is real and symmetric and its eigenvectors must be real vectors. Thus, property (\romannumeral 2) holds.
		\end{proof}
	
\textit{Search framework}.--- 
Here we will present a universal approach that can be used to design deterministic quantum search algorithms on a variety of graphs. 
Let $G=(V,E)$ be a graph with Laplacian matrix $L$ and the marked vertex $m$. Our aim is to find the location of $m$ via performing alternately the continuous-time quantum walk operator $e^{-iLt}$ and the search operator $e^{-i\theta |m\rangle\langle m|}$ on the initial state $|s\rangle$.
The idea behind our approach is to prove that there exist an integer $p$ and real numbers $\gamma$, $\theta_k$, $t_k$ $(k\in \{1, 2, \dots, p\})$ such that the following equation holds:
	\begin{equation}\label{e3}
		|m\rangle=e^{-i\gamma}\prod_{k=1}^{p}e^{-i\theta_k |m\rangle\langle m|}e^{-iLt_k} |s\rangle.\nonumber
	\end{equation}
	
	Let $S$ be a finite set of integers.  $gcd(S)$ denotes the greatest common divisor of all nonzero elements in $S$. If there is no nonzero element in $S$, then we let $gcd(S)=1$.

	\begin{definition}
		Let $M$ be an $N\times N$ Hermitian matrix  with spectral decomposition $M=\sum_{i=1}^{N} \lambda_i |\eta_i \rangle \langle \eta_i|$, where $\lambda_1,\dots,\lambda_N$ are integers, at least one of which  is $0$. 
		
		(\romannumeral 1) Define $\Lambda_0=\{\lambda_1,\dots,\lambda_N\}$. For $k\geq 0$, we recursively define $\Lambda_{k+1}$ and $\overline{\Lambda}_{k+1}$ as follows
		\begin{equation}\label{e4}
			\Lambda_{k+1}=\{\lambda\in\Lambda_ {k} \mid e^{-i\lambda \frac{\pi}{gcd(\Lambda_ k)}}=1\},
			\nonumber
		\end{equation}
		and \begin{equation}\label{e5}
			\overline{\Lambda}_{k+1}=\{ \lambda\in\Lambda_ {k} \mid e^{-i\lambda \frac{\pi}{gcd(\Lambda_ k)}}=-1\}.
			\nonumber
		\end{equation}
		We use $d_M$ to denote the least $k$ such that $\Lambda_k$ contains only 0.  
		
		(\romannumeral 2) Let $|m\rangle=\sum_{i=1}^{N}\alpha_i |\eta_i\rangle$ be a vector where each $\alpha_i$ is a real number such that
		\[
		(\sum_{\lambda_i \in {\Lambda}_k} \alpha_i^2)(\sum_{\lambda_i \in \overline{\Lambda}_k} \alpha_i^2)\neq 0
		\]
		for any $k\in\{1,\dots,d_M\}$. We let $|w_0\rangle=|m\rangle$, and for $k\in\{1,\dots,d_M\}$ define
		\begin{equation}\label{e6}
			|w_k\rangle =\frac{1}{\sqrt{\sum_{\lambda_i \in {\Lambda}_k} \alpha_i^2}} \sum_{\lambda_i \in {\Lambda}_k} \alpha_i |\eta_i\rangle
			\nonumber
		\end{equation}
		and \begin{equation}\label{e7}
			|\overline{w}_k\rangle =\frac{1}{\sqrt{\sum_{\lambda_i \in \overline{\Lambda}_k} \alpha_i^2}} \sum_{\lambda_i \in \overline{\Lambda}_k} \alpha_i |\eta_i\rangle.
			\nonumber
		\end{equation}
		\label{de1}
	\end{definition}

	\begin{example}
		Given the following $6\times 6$ Hermitian matrix
		\begin{equation}
	    \begin{aligned}
		M &=  0|\eta_1 \rangle \langle \eta_1|+1|\eta_2 \rangle \langle \eta_2|+3|\eta_3 \rangle \langle \eta_3|+6|\eta_4 \rangle \langle \eta_4|\\
	&+64|\eta_5 \rangle \langle \eta_5|
	+64|\eta_6 \rangle \langle \eta_6|,
	    \end{aligned}
        \nonumber
		\end{equation}
		the process of computing $d_M$, $\Lambda_0$, $|w_0\rangle$, and $\Lambda_k$, $\overline{\Lambda}_{k}$, $|w_k\rangle$, $|\overline{w}_{k}\rangle$ for $k\in\{1,2,\dots,d_M\}$ with $|m\rangle=\sum_{i=1}^{6}\alpha_i |\eta_i\rangle$ where $\alpha_1,\dots,\alpha_6$ are not zero is shown in Fig.~\ref{fig1}.
		\begin{figure}[H]
			\centering
			\begin{tikzpicture} [sibling distance =120pt,level distance=60pt,thick=1, scale=1, every node/.style={scale=1}]
				\node(A)[rectangle,minimum width =20pt,minimum height =20pt ,draw=blue,align=center] {$\Lambda_0=\{0,1,3,6,64,64\}$\\$|{w}_0\rangle=|m\rangle=\sum_{i=1}^{6}\alpha_i |\eta_i\rangle$}
				child {
					node (B)[rectangle,minimum width =90pt ,minimum height =20pt ,draw=blue,align=center] {$\overline{\Lambda}_1=\{1,3\}$\\$|\overline{w}_1\rangle=\frac{\alpha_2 |\eta_2\rangle+\alpha_3 |\eta_3\rangle}{\sqrt{\alpha_2^2+\alpha_3^2}}$}				
				}
				child {node(C)[rectangle,minimum width =20pt,minimum height =20pt ,draw=blue,align=center]  {$\Lambda_1=\{0,6,64,64\}$\\$|{w}_1\rangle =\frac{\alpha_1 |\eta_1\rangle+\alpha_4 |\eta_4\rangle+\alpha_5 |\eta_5\rangle+\alpha_6 |\eta_6\rangle}{\sqrt{\alpha_1^2+\alpha_4^2+\alpha_5^2+\alpha_6^2}}$}
					child {node(D)[rectangle,minimum width =90pt,minimum height =20pt ,draw=blue,align=center]  {$\overline{\Lambda}_2=\{6\}$\\$|\overline{w}_2\rangle=\frac{\alpha_4 |\eta_4\rangle}{\sqrt{\alpha_4^2}}$}					
					}
					child {node(E)[rectangle,minimum width =20pt,minimum height =20pt ,draw=blue,align=center]{$\Lambda_2=\{0,64,64\}$\\$|{w}_2\rangle =\frac{\alpha_1 |\eta_1\rangle+\alpha_5 |\eta_5\rangle+\alpha_6 |\eta_6\rangle}{\sqrt{\alpha_1^2+\alpha_5^2+\alpha_6^2}}$}
						child {node(F)[rectangle,minimum width =90pt,minimum height =20pt ,draw=blue,align=center]  {$\overline{\Lambda}_3=\{64,64\}$\\$|\overline{w}_3\rangle=\frac{\alpha_5 |\eta_5\rangle+\alpha_6 |\eta_6\rangle}{\sqrt{\alpha_5^2+\alpha_6^2}}$}
						}
						child {node(G)[rectangle,minimum width =20pt,minimum height =20pt ,draw=blue,align=center] {$\Lambda_3=\{0\}$\\$|{w}_3\rangle =\frac{\alpha_1 |\eta_1\rangle}{\sqrt{\alpha_1^2}} $}
						}
						child [missing] {}
					}
					child [missing] {}
				}
				child [missing] {}
				;
				\begin{scope}[nodes = {draw = none}]
					\path (A) -- (B) node [near start, left]  {$\frac{\lambda_i}{gcd(\Lambda_{0}) }$ is odd};
					\path (A) -- (C) node [near start, right]  {$\frac{\lambda_i}{gcd(\Lambda_{0})}$ is even};
					\path (C) -- (D) node [near start, left]  {$\frac{\lambda_i}{gcd(\Lambda_{1})}$ is odd};
					\path (C) -- (E) node [near start, right]  {$\frac{\lambda_i}{gcd(\Lambda_{1})}$ is even};
					\path (E) -- (F) node [near start, left]  {$\frac{\lambda_i}{gcd(\Lambda_{2})}$ is odd};
					\path (E) -- (G) node [near start, right]  {$\frac{\lambda_i}{gcd(\Lambda_{2})}$ is even};
					\begin{scope}[nodes = {below = 20pt}]
						\node            at (G) {$d_M=3$};
					\end{scope}
				\end{scope}
			\end{tikzpicture}
			\caption{The process of computing $d_M$, $\Lambda_0$, $|w_0\rangle$ and $\Lambda_k$, $\overline{\Lambda}_{k}$, $|w_k\rangle$, $|\overline{w}_{k}\rangle$ for $k\in \{1,\dots,d_M\}$. 
			}
			\label{fig1}
		\end{figure}
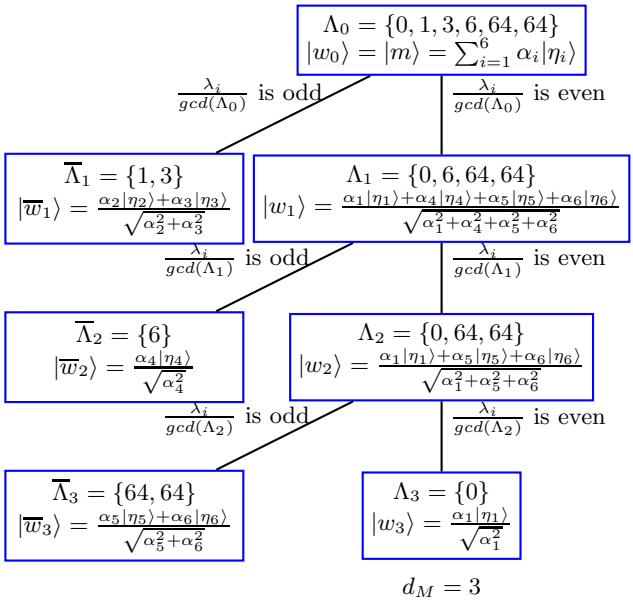
	\end{example}
	
In the following, we consider a  kind of restricted graphs which are	vertex transitive and whose Laplacian matrices have only integer eigenvalues.  A graph $G$ is said to be vertex transitive, if  for any two vertices $v_1$ and $v_2$ of $G$, there is a automorphism $f:G\rightarrow G$ such that $f(v_1)=v_2$. 
Now let $G$ be such a graph with $N$ vertices and  Laplacian matrix $L$. Assume the  spectral decomposition  is $L=\sum_{i=1}^{N} \lambda_i |\eta_i \rangle \langle \eta_i|$.
By Lemma~\ref{lemma2}, one  can see that $L$ satisfies the condition in Definition~\ref{de1}, and thus we can define ${d_L}$, $\Lambda_0$, and $\Lambda_k$, $\overline{\Lambda}_{k}$, $(k\in\{1,2,\dots,{d_L}\})$ as in (\romannumeral 1) of Definition~\ref{de1}.
	In the search space spanned by $\{|v_1\rangle,\dots,|v_N\rangle\}$, $|\eta_1\rangle,\dots,|\eta_N\rangle$ constitute a set of  orthonormal  basis,
	and the marked vertex $|m\rangle$ can be represented as $|m\rangle=\sum_{i=1}^{N}\alpha_i |\eta_i\rangle$, where each $\alpha_i=\langle \eta_i|m\rangle$ is a real number.
	Since $G$ is vertex transitive, as mentioned in \cite{RN6}, we have
	\begin{equation}\label{e14}
		\begin{aligned}
			\sqrt{\sum\nolimits_{\lambda_i \in \Lambda_k} \alpha_i^2}=\sqrt{ \frac{|\Lambda_k|}{N}}, \\
			\sqrt{\sum\nolimits_{\lambda_i \in \overline{\Lambda}_k} \alpha_i^2}=\sqrt{ \frac{|\overline{\Lambda_k}|}{N}},
		\end{aligned}
	\end{equation} 
	where $|\Lambda_k|$ and $|\overline{\Lambda_k}|$  are both positive.\footnote{For a set $S$, $|S|$ denotes its cardinality.}
	For $|m\rangle=\sum_{i=1}^{N}\alpha_i |\eta_i\rangle$, we define $|w_0\rangle$,  $|w_k\rangle$, $|\overline{w}_{k}\rangle$ $(k\in\{1,2,\dots,{d_L}\})$ as in (\romannumeral 2) of Definition~\ref{de1}.
	We can observe that $\Lambda_{k}=\Lambda_{k+1} \cup \overline{\Lambda}_{k+1}$, $|w_k\rangle \in span\{|w_{k+1}\rangle, |\overline{w}_{k+1\rangle}\}$ $(k\in \{0,1,\dots,{d_L}-1\})$ and ${d_L}$ is less than the number of distinct eigenvalues of $L$. 
  
  Below we present one of our main results.

	\begin{theorem}
		Given a vertex transitive graph $G=(V,E)$ with $N$ vertices and Laplacian matrix $L$ such that each eigenvalue of $L$ is an integer, we can find an integer $p \in O(2^{{d_L}-1}\sqrt{N})$, and real numbers $\gamma$, $\theta_k$, $t_k$ $(k\in \{1,2,\dots,p\})$ such that $|s\rangle=e^{-i\gamma}\prod_{k=1}^{p}e^{-i\theta_k |m\rangle\langle m|} e^{-iLt_k}|m\rangle$, where $m$ is the marked vertex and $|s\rangle=\frac{1}{\sqrt{N}}\sum_{v\in V}  |v \rangle $.
		\label{th1}                       
	\end{theorem}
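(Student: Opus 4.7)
\textit{Proof plan.} My plan is to prove the theorem by induction on $k\in\{0,1,\ldots,d_L\}$, constructing at each stage an alternating composition $\Sigma_k$ of operators drawn from $\{e^{-iLt}\}$ and $\{e^{-i\theta|m\rangle\langle m|}\}$, of length $p_k$, satisfying $\Sigma_k|m\rangle = e^{i\phi_k}|w_k\rangle$ for some real $\phi_k$. Taking $k=d_L$ finishes the proof: since $G$ is connected and vertex transitive, Lemma~\ref{lemma2}(\romannumeral 1) makes $0$ a simple eigenvalue of $L$ with eigenvector $|s\rangle$, so $\Lambda_{d_L}=\{0\}$ forces $|w_{d_L}\rangle=|s\rangle$, and the theorem's identity follows with $\gamma=-\phi_{d_L}$. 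The base case $k=0$ is immediate, with $\Sigma_0=I$ and $p_0=0$.

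For the inductive step $k\to k+1$, I would work inside the two-dimensional subspace $\mathcal{H}_k:=\mathrm{span}\{|w_{k+1}\rangle,|\overline{w}_{k+1}\rangle\}$, which contains $|w_k\rangle=a_k|w_{k+1}\rangle+b_k|\overline{w}_{k+1}\rangle$ with $a_k=\sqrt{|\Lambda_{k+1}|/|\Lambda_k|}$ and $b_k=\sqrt{|\overline{\Lambda}_{k+1}|/|\Lambda_k|}$, by~(\ref{e14}). Two ingredients drive the construction. First, by the very definitions of $\Lambda_{k+1}$ and $\overline{\Lambda}_{k+1}$, the walk at $t_k^\star=\pi/\gcd(\Lambda_k)$ satisfies $e^{-iLt_k^\star}|w_{k+1}\rangle=|w_{k+1}\rangle$ and $e^{-iLt_k^\star}|\overline{w}_{k+1}\rangle=-|\overline{w}_{k+1}\rangle$, so on $\mathcal{H}_k$ it acts as reflection about $|w_{k+1}\rangle$. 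Second, the inductive hypothesis yields the operator identity $\Sigma_k\,e^{-i\beta|m\rangle\langle m|}\,\Sigma_k^\dagger=e^{-i\beta|w_k\rangle\langle w_k|}$ (the phase $\phi_k$ cancels inside the projector), implementing any tunable phase about $|w_k\rangle$ in $2p_k+1$ primitive factors. Plugging these into Lemma~\ref{l1} with ``$|s\rangle$''$\leftrightarrow|w_{k+1}\rangle$ (fixed reflection) and ``$|m\rangle$''$\leftrightarrow|w_k\rangle$ (variable phase)---and feeding in the instance-free value $|\langle w_k|w_{k+1}\rangle|=a_k$---gives an $O(1/a_k)$-iteration sequence sending $|w_{k+1}\rangle$ to $|w_k\rangle$ up to a global phase; inverting this sequence yields $T_{k+1}$ with $T_{k+1}|w_k\rangle\propto|w_{k+1}\rangle$. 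Setting $\Sigma_{k+1}:=T_{k+1}\Sigma_k$ closes the induction with $p_{k+1}=p_k+O((p_k+1)/a_k)$; telescoping and using $\prod_{k=0}^{d_L-1}a_k=1/\sqrt{N}$ yields $p_{d_L}\in O(2^{d_L-1}\sqrt{N})$. Consecutive same-type factors can finally be merged to reach the strictly alternating form stated in the theorem.

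The main obstacle I anticipate is reconciling the global and restricted actions of the walk. At time $t_k^\star$, the operator $e^{-iLt_k^\star}$ is a clean reflection about $|w_{k+1}\rangle$ only on $\mathcal{H}_k$; on eigenspaces of $L$ whose eigenvalues lie in $\overline{\Lambda}_j$ for some $j\le k$ it multiplies by arbitrary complex phases. One must therefore verify carefully that the state stays inside $\mathcal{H}_k$ throughout the level-$(k+1)$ rotation. The walk preserves $\mathcal{H}_k$ by the identities above, and $\Sigma_k e^{-i\beta|m\rangle\langle m|}\Sigma_k^\dagger=e^{-i\beta|w_k\rangle\langle w_k|}$ also preserves $\mathcal{H}_k$ (it is a phase on the $|w_k\rangle$-component and the identity on every vector orthogonal to $|w_k\rangle$), even though the intermediate vectors produced by the inner $\Sigma_k^\dagger$ temporarily leave $\mathcal{H}_k$ and are only uncomputed by the outer $\Sigma_k$. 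Pinning down this global-versus-restricted invariance, together with invoking the vertex-transitivity identity~(\ref{e14}) as the instance-free value of $|\langle w_k|w_{k+1}\rangle|$ that Lemma~\ref{l1} needs, is the technical heart of the argument.
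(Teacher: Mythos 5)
Your proposal is correct and follows essentially the same route as the paper: the walk at $t=\pi/\gcd(\Lambda_k)$ as a reflection about $|w_{k+1}\rangle$ on $\mathrm{span}\{|w_{k+1}\rangle,|\overline{w}_{k+1}\rangle\}$, the conjugation $\Sigma_k e^{-i\beta|m\rangle\langle m|}\Sigma_k^\dagger=I-(1-e^{-i\beta})|w_k\rangle\langle w_k|$ as the tunable phase, Lemma~\ref{l1} with the instance-free overlap from vertex transitivity, and the telescoping product giving $O(2^{d_L-1}\sqrt{N})$. Your induction is the paper's recursive Lemma~\ref{lm3} in different clothing, and your explicit attention to the global-versus-restricted action of the walk (and to inverting the Lemma~\ref{l1} sequence) is if anything slightly more careful than the paper's presentation.
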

	
	\begin{proof}
		The idea for the proof is to divide the search space into a series of subspaces $span\{|w_{i}\rangle ,|\overline{w}_{i}\rangle \}$ ($i\in\{1,\cdots,d_L\}$). In each subspace $ span\{|w_{i}\rangle ,|\overline{w}_{i}\rangle \}$, we start from $|w_{i-1}\rangle$ and use $e^{-i\theta |m\rangle\langle m|}$ and $e^{-iLt}$ to construct $I- 2|w_i\rangle \langle w_i|$ and $I- (1-e^{-i\theta})|w_{i-1}\rangle \langle w_{i-1}|$ and perform deterministic algorithms to make the state evolve into $|w_i\rangle$ as we do in Lemma \ref{l1}. By repeating the above operation in order, we obtain a procedure which  achieves the following state evolution:
  \begin{equation}
   |m\rangle = |w_0\rangle \rightarrow |w_1\rangle \rightarrow \dots \rightarrow |w_d\rangle=|s\rangle.
  \nonumber
		\end{equation} 
 Therefore, the reversed procedure is what we want.

  We start from the walk operator $e^{-iLt}$. Let $t=\frac{\pi}{gcd(\Lambda_0)}$.  Then
		\begin{equation}\label{e10}
			\begin{aligned}
		e^{-iL  \frac{\pi}{gcd(\Lambda_0)}}&=\sum_{i=1}^{N} e^{-i\lambda_i  \frac{\pi}{gcd(\Lambda_0)}}  |\eta_i \rangle \langle \eta_i|\\
		&=\sum_{\lambda_i \in \Lambda_1} |\eta_i \rangle \langle \eta_i|-\sum_{\lambda_i \in \overline{\Lambda}_1} |\eta_i \rangle \langle \eta_i|.
			\end{aligned}
		\end{equation}
		Recall that \begin{equation}
			|m\rangle=|w_0\rangle \in span\{|w_{1}\rangle ,|\overline{w}_{1}\rangle \}
			\label{e12} 
			\nonumber
		\end{equation}
		and $|w_1\rangle$ ($|\overline{w}_{1}\rangle$) are linear combinations of eigenvectors $|\eta_1\rangle,\dots,|\eta_N\rangle$ whose corresponding eigenvalues are in $\Lambda_1$  ($\overline{\Lambda}_{1}$). We can see that 
  \begin{equation}
  \begin{aligned}
  &e^{-iL  \frac{\pi}{gcd(\Lambda_0)}}|w_1\rangle=|w_1\rangle \\
  &e^{-iL  \frac{\pi}{gcd(\Lambda_0)}}|\overline{w}_{1}\rangle=-|\overline{w}_{1}\rangle.
  \end{aligned}
  \end{equation}
  Restricted to the subspace $span\{|w_{1}\rangle ,|\overline{w}_{1}\rangle \}$, we have
		\begin{equation}\label{e13}
			\begin{aligned}		
				e^{-iL \frac{\pi}{gcd(\Lambda_0)}} & =2|{w}_{1}\rangle \langle {w}_{1}|-I=e^{i\pi}(I-2|{w}_{1}\rangle \langle {w}_{1}|) ,\\
				e^{-i\theta|m\rangle \langle m| } & =I- (1-e^{-i\theta})|w_0\rangle \langle w_0| .
			\end{aligned}
		\end{equation}
		From \eqref{e14}, we have
		\begin{equation}\label{e15}
			\langle w_k|w_{k+1}\rangle=\frac{\sqrt{\sum_{\lambda_i \in \Lambda_{k+1}} \alpha_i^2}}{\sqrt{\sum_{\lambda_i \in \Lambda_k} \alpha_i^2}} =\frac{\sqrt{|\Lambda_{k+1}|} }{\sqrt{|\Lambda_{k}|}},
		\end{equation}
		which is a number independent of $|m\rangle$. According to \eqref{e13}, \eqref{e15} and Lemma~\ref{l1}, we can find parameters $p \in O(\frac{1}{|\langle w_{1}|w_0\rangle|})$, $\gamma$, $t=\frac{\pi}{gcd( \Lambda_0)}$ and $\theta_k$ $(k\in \{1,2,\dots,p\})$ such that
		\begin{equation}
			|w_1\rangle=e^{-i\gamma}\Big(\prod_{k=1}^{p}e^{-i\theta_k |m\rangle\langle m|} e^{-iLt}\Big)|w_0\rangle.
			\label{e16}
		\end{equation}
		Next, we shall prove the following fact.
		
	\begin{lemma}	If there are parameters $p$, $\gamma$, and $\theta_k$, $t_k$ $(k\in \{1,2,\dots,p\})$ satisfying
		\begin{equation}\label{e17}
			|w_i\rangle=e^{-i\gamma}\prod_{k=1}^{p}e^{-i\theta_k |m\rangle\langle m|} e^{-iLt_k}|w_0\rangle,
		\end{equation}
		where $1\leq i \leq {d_L}-1$, then we can find parameters $p'\in O(\frac{2p}{|\langle w_i|w_{i+1}\rangle|})$, $\gamma'$, and $\theta'_k$, $t'_k$, $(k\in \{1,2,\dots, p'\})$ such that
		\begin{equation}	|w_{i+1}\rangle=e^{-i\gamma'}\prod_{k=1}^{p'}e^{-i\theta'_k |m\rangle\langle m|} e^{-iLt'_k}|w_0\rangle.
			\label{e18}
		\end{equation}\label{lm3}
	\end{lemma}
\begin{proof} [Proof of Lemma \ref{lm3}]  Denote $e^{-i\gamma}\prod_{k=1}^{p}e^{-i\theta_k |m\rangle\langle m|} e^{-iLt_k}$ in \eqref{e17} by $A_i$. Then we have
		\begin{equation}\label{e19}
			\begin{aligned}
				A_i e^{-i\theta |m\rangle\langle m|} A_i^{\dagger} &=  A_i(I- (1-e^{-i\theta})|w_0\rangle \langle w_0|) A_i^{\dagger} \\
				& =I- (1-e^{-i\theta})|w_i\rangle \langle w_i|.
				\nonumber
			\end{aligned}
		\end{equation}
		Recall that $ |w_i\rangle \in span\{|w_{i+1}\rangle ,|\overline{w}_{i+1}\rangle \}$. Restricted to this subspace, we have 
		\begin{equation}
			e^{-iL \frac{\pi}{gcd(\Lambda_ {i})}}=2|{w}_{i+1}\rangle \langle {w}_{i+1}|-I=e^{i\pi}(I-2|{w}_{i+1}\rangle \langle {w}_{i+1}|) .
			\label{20}
			\nonumber
		\end{equation}
		Therefore, by Lemma~\ref{l1} there are parameters $ p''\in O(\frac{1}{|\langle w_i|w_{i+1}\rangle|})$, $\gamma''$, $t=\frac{\pi}{gcd(\Lambda_ {i})}$ and $\theta''_k$ $(k\in \{1,2,\dots,p''\})$ such that 
		\begin{equation}\label{e21}
			\begin{aligned}
				|w_{i+1}\rangle & =e^{-i\gamma''}\Big(\prod_{k=1}^{p''} A_i e^{-i\theta''_k |m\rangle\langle m|} A_i^{\dagger}e^{-iLt}\Big)|w_i\rangle \\
				&=e^{-i\gamma''}\Bigr(\prod_{k=1}^{p''}A_i e^{-i\theta''_k |m\rangle\langle m|} A_i^{\dagger}	e^{-iLt}\Bigl) A_i|w_0\rangle.
			\end{aligned}
		\end{equation}
		By replacing $A_i$ with $e^{-i\gamma}\prod_{k=1}^{p}e^{-i\theta_k |m\rangle\langle m|} e^{-iLt_k}$ in \eqref{e21}, we can get the parameters satisfying \eqref{e18}. This proves the fact.
  \end{proof}
  By \eqref{e16} and using Lemma \ref{lm3} recursively, we can find parameters $p \in O(\frac{1}{2}\prod_{k=0}^{{d_L}-1}\frac{2}{|\langle w_{k+1}|w_k\rangle|})$, $\gamma$, and $\theta_k$, $t_k$ $(k\in \{1,2,\dots,p\})$ such that
		\begin{equation} 
			\begin{aligned}
			|w_{{d_L}}\rangle&=e^{-i\gamma}\prod_{k=1}^{p}e^{-i\theta_k |m\rangle\langle m|} e^{-iLt_k}|w_0\rangle\\
			&=e^{-i\gamma}\prod_{k=1}^{p}e^{-i\theta_k |m\rangle\langle m|} e^{-iLt_k}|m\rangle.
			\label{e22}
		\end{aligned}
		 \end{equation}
		By Lemma~\ref{lemma2}, $0$ is a simple eigenvalue of $L$, and $|s\rangle$ is the corresponding eigenvector.
		Thus we have
		\begin{equation}\label{e23} 
			\begin{aligned}
			|w_{{d_L}}\rangle&=\frac{1}{\sqrt{\sum_{\lambda_i \in {\Lambda}_{d_L}} \alpha_i^2}} \sum_{\lambda_i \in {\Lambda}_{d_L}}\alpha_i |\eta_i\rangle\\
			&=\frac{1}{\sqrt{\sum_{\lambda_i =0} \alpha_i^2}} \sum_{\lambda_i =0} \alpha_i |\eta_i\rangle=e^{i\gamma_0}|s\rangle, 
			\nonumber
		\end{aligned}
		\end{equation} where $e^{i\gamma_0}=1$ or $-1$
		and we can ignore it since it is a global phase.
		The upper bound of times that we call the search operator is
		\begin{equation}\label{e24} 
			\begin{aligned}
				\frac{1}{2}\prod_{k=0}^{{d_L}-1}\frac{2}{|\langle w_{k+1}|w_k\rangle|} & =\frac{1}{2}\prod_{k=0}^{{d_L}-1}\frac{2\sqrt{\sum_{\lambda_ \in \Lambda_k} \alpha_i^2}}{\sqrt{\sum_{\lambda_i \in \Lambda_{k+1}} \alpha_i^2}} \\
				& = \frac{2^{{d_L}-1}\sqrt{\sum_{\lambda_ \in \Lambda_0} \alpha_i^2}}{\sqrt{\sum_{\lambda_i \in \Lambda_{{d_L}}} \alpha_i^2}}\\
				& =\frac{2^{{d_L}-1}\sqrt{|\Lambda_{0}|}}{\sqrt{|\Lambda_{{d_L}}}| } \\
				& =2^{{d_L}-1}\sqrt{N}.
				\nonumber
			\end{aligned}
		\end{equation}
		This completes the proof of Theorem \ref{th1} . 
	\end{proof}
\begin{remark}
In Theorem \ref{th1}, 
 Laplacian matrix $L$ can be relaxed to have  rational eigenvalues rather than restricted to integer eigenvalues.
 In fact, for rational numbers $\lambda_i$ $(i\in\{1,2,\dots,N\})$, we can always find a number $q$ such that $q\lambda_i$ $(i\in\{1,2,\dots,N\})$ are all integers.
\end{remark}

\textit{ Applications}.---
It will be shown that by applying Theorem~\ref{th1} we can not only obtain easily the results in \cite{RN5,RN6,RN8},  but also design deterministic quantum spatial search algorithms on some more general graphs.

We first take Johnson graphs as an example. The Johnson graph $J(n, k)$ has vertices given by the $k$-subsets of $\{1,\cdots,n\}$, with two vertices connected when their intersection has size $k-1$. It has many interesting properties and connections with many important problems. Marsh and Wang~\cite{RN6} showed that a quadratic speedup spatial search algorithm on $J(n, 2)$ can be designed using alternating quantum walks. We prove that, for any fixed positive integer $k$, the quadratic speedup can be generalized to the Johnson graphs $J(n, k)$.

\begin{theorem}		\label{th2}
	Let $k$ be a fixed positive integer. For any Johnson graph $J(n, k)$ with Laplacian matrix $L$ and $N$ vertices in which there is a marked vertex $m$, we can design a quantum search algorithm that deterministically finds the marked vertex using $O(\sqrt{N})$ calls to the search operator $e^{-i\theta|m\rangle\langle m|}$.
	
\end{theorem}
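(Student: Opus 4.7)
The plan is to check that $J(n,k)$ fits every hypothesis of Theorem~\ref{th1} and that the constant $2^{d_L-1}$ depends only on $k$, so the $O(\sqrt{N})$ bound survives with $k$ fixed.

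First I would verify vertex-transitivity: the symmetric group $S_n$ acts on the $k$-subsets of $\{1,\ldots,n\}$ by relabelling, transitively, and this action is by graph automorphisms. Next I would compute the spectrum of the Laplacian $L=D-A$. The Johnson graph is $k(n-k)$-regular, and the adjacency eigenvalues are $(k-j)(n-k-j)-j$ with multiplicities $\binom{n}{j}-\binom{n}{j-1}$ for $j=0,1,\ldots,k$ (a standard fact from the theory of the Johnson scheme). A short algebraic simplification $k(n-k)-(k-j)(n-k-j)+j=j(n-j+1)$ then yields the Laplacian eigenvalues $\lambda_j=j(n-j+1)$ for $j\in\{0,1,\ldots,k\}$. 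These are non-negative integers, and the number of distinct eigenvalues is at most $k+1$ (distinct $j,j'\in\{0,\ldots,k\}$ give equal eigenvalues only when $j+j'=n+1$, which only decreases the count further and does not affect the argument).

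With these verifications in hand, Theorem~\ref{th1} applies directly and produces real parameters $\gamma,\theta_j,t_j$ such that
\begin{equation*}
|s\rangle \;=\; e^{-i\gamma}\prod_{j=1}^{p} e^{-i\theta_j |m\rangle\langle m|}\, e^{-iLt_j}\,|m\rangle, \qquad p \in O\!\left(2^{d_L-1}\sqrt{N}\right).
\end{equation*}
By the remark preceding Theorem~\ref{th1} that $d_L$ is strictly less than the number of distinct eigenvalues of $L$, we obtain $d_L\le k$, and hence $p=O(2^{k-1}\sqrt{N})=O(\sqrt{N})$ because $k$ is fixed. Taking the adjoint of the displayed identity (which reverses the order of the factors and negates each time and phase parameter) yields a deterministic procedure that starts from $|s\rangle=\tfrac{1}{\sqrt{N}}\sum_v|v\rangle$ and ends exactly at $|m\rangle$, using $O(\sqrt{N})$ calls to $e^{-i\theta|m\rangle\langle m|}$ interleaved with the same number of continuous-time walk steps $e^{-iLt}$.

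The main obstacle, and really the only substantive calculation, is pinning down the Laplacian spectrum and confirming that every eigenvalue is an integer; after that, the Theorem~\ref{th1} machinery does all the work. A secondary point worth checking is the non-degeneracy hypothesis of Definition~\ref{de1}(ii), namely that $\sum_{\lambda_i\in\Lambda_k}\alpha_i^2$ and $\sum_{\lambda_i\in\overline{\Lambda}_k}\alpha_i^2$ are both nonzero for every $k\le d_L$. Vertex-transitivity pins these sums down via \eqref{e14} to $|\Lambda_k|/N$ and $|\overline{\Lambda}_k|/N$, which are strictly positive at every level of the recursion by the very definition of $d_L$, so this condition is automatic. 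The only cost of the approach is the exponential factor $2^{k-1}$ in the constant, which is harmless for fixed $k$ but would need a refinement of Theorem~\ref{th1} if one wanted a speedup uniform in $k$.
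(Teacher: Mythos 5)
Your proposal is correct and follows essentially the same route as the paper's proof: verify vertex-transitivity and integrality of the Laplacian spectrum, bound the number of distinct eigenvalues (hence $d_L$) by $k+1$, invoke Theorem~\ref{th1}, and take the adjoint to obtain the search from $|s\rangle$. The only difference is that you compute the spectrum $\lambda_j=j(n-j+1)$ explicitly where the paper simply cites the standard reference, which is a welcome but not substantively different elaboration.
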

\begin{proof}
	The Johnson graph $J(n,k)$ is vertex transitive, and $L$ has $min(k,n-k)+1$ distinct eigenvalues that are all integers \cite{Brouwer2012}.
	By Theorem~\ref{th1}, we can construct a quantum algorithm $A$ satisfying $A|m\rangle=|s\rangle$ that uses $O(2^{{d_L}-1}\sqrt{N})$ calls to the search operator where $|s\rangle$ is the uniform superposition state over all vertices of the graph and ${d_L}$ is defined as in Definition \ref{de1}. Obviously, $A^\dagger|s\rangle=|m\rangle$, which means that the algorithm can find the marked vertex from $|s\rangle$. Recall that ${d_L}$ is less than the number of distinct eigenvalues of $L$ and we have ${d_L}<min(k,n-k)+1\leq k+1$.  Since $k$ is fixed, the number of queries for the algorithm $A^\dagger$ to call the search operator is bounded by $O(\sqrt{N})$. Therefore, the algorithm achieves the quadratic speedup.
\end{proof}
Rook graphs and complete-square graphs are both vertex transitive.
An $m \times n$ rook graph is the graph Cartesian product $K_m \mathbin{\square} K_n$ of complete graphs, having a total of $N = mn$ vertices. The Laplacian matrix of a rook graph has at most four distinct eigenvalues which are all integers \cite{RN6}. 
The $K_n \mathbin{\square} Q_2$ graph that is called complete-square graph is the graph Cartesian product of a complete graph $K_n$ and a square graph $Q_2$. The Laplacian matrix of a complete-square graph has at most six distinct eigenvalues which are all integers \cite{RN6}. Using Theorem~\ref{th1} on rook graphs and complete-square graphs as we do in Theorem~\ref{th2}, we can get similar quantum algorithms that deterministically find the marked vertex with quadratic speedups, which cover the results in~\cite{RN5,RN6}.

Next, we consider the complete bipartite graph $K(N_1, N_2)$, which  usually neither is vertex transitive nor has integer eigenvalues.  So Theorem~\ref{th1} does not apply to this situation directly. However, we can still design a deterministic quantum search algorithm  in a similar way as done in Theorem~\ref{th1}.

A complete bipartite graph $K(N_1, N_2)$ is an undirected graph that has its vertex set partitioned into two subsets $V_1$ of size $N_1$ and $V_2$ of size $N_2$, such that there is an edge from every vertex in $V_1$ to every vertex in $V_2$. A special case of complete bipartite graphs is star graphs, where there is only one vertex in $V_1$ or $V_2$. For this case, Qu et\,al.~\cite{RN8} gave a quantum search algorithm using alternating quantum walks that has $O(\sqrt{N})$ calls to the search operators.  We will give a generalized algorithm for all complete bipartite graphs.

\begin{theorem}
	For any complete bipartite graph $K(N_1, N_2)$ with a marked vertex $m$, we can design a quantum search algorithm that deterministically finds the marked vertex using  $O(\sqrt{N_1+N_2})$ calls to the search operator $e^{-i\theta|m\rangle\langle m|}$.
\end{theorem}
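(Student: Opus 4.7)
The plan is to reduce the dynamics to a low-dimensional invariant subspace and then mimic the argument of Theorem~\ref{th1} there, handling the non--vertex-transitivity of $K(N_1,N_2)$ by tracking the amplitudes of $|m\rangle$ explicitly rather than through Eq.~(\ref{e14}).

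First, I would observe that although $K(N_1,N_2)$ is not vertex transitive, its Laplacian still has only integer eigenvalues $\{0, N_1, N_2, N_1+N_2\}$. Without loss of generality take $m \in V_1$. Let $|s_i\rangle = \frac{1}{\sqrt{N_i}}\sum_{v\in V_i}|v\rangle$ for $i = 1, 2$, let $|s\rangle = \frac{1}{\sqrt{N_1+N_2}}\sum_v |v\rangle$, and set $|s^\perp\rangle = \frac{1}{\sqrt{N_1+N_2}}(\sqrt{N_2}|s_1\rangle - \sqrt{N_1}|s_2\rangle)$. A short calculation shows $L|s\rangle = 0$, $L|s^\perp\rangle = (N_1+N_2)|s^\perp\rangle$, and $L(|m\rangle - \tfrac{1}{\sqrt{N_1}}|s_1\rangle) = N_2(|m\rangle - \tfrac{1}{\sqrt{N_1}}|s_1\rangle)$, so the subspace $\mathcal{S} = span\{|s\rangle, |s^\perp\rangle, |m\rangle\}$ is at most three-dimensional and closed under both $e^{-iLt}$ and $e^{-i\theta|m\rangle\langle m|}$. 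In the eigenbasis of $L$ restricted to $\mathcal{S}$,
\[|m\rangle = \tfrac{1}{\sqrt{N_1+N_2}}|s\rangle + \sqrt{\tfrac{N_2}{N_1(N_1+N_2)}}|s^\perp\rangle + \sqrt{\tfrac{N_1-1}{N_1}}|\hat{m}^\perp\rangle,\]
so only the three integer eigenvalues $\Lambda_0 = \{0, N_2, N_1+N_2\}$ carry amplitude.

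Second, I would rerun the construction of Theorem~\ref{th1} inside $\mathcal{S}$. Definition~\ref{de1} applies with $M = L$ restricted to $\mathcal{S}$, and at each layer the choice $t_k = \pi/gcd(\Lambda_k)$ makes $e^{-iLt_k}$ act as $\pm(I - 2|w_{k+1}\rangle\langle w_{k+1}|)$ on $span\{|w_{k+1}\rangle, |\overline{w}_{k+1}\rangle\}$. Combined with $e^{-i\theta|m\rangle\langle m|}$ (conjugated by the operator built so far, as in Lemma~\ref{lm3}) and Lemma~\ref{l1}, this yields a deterministic step $|w_k\rangle \to |w_{k+1}\rangle$; iterating gives $|m\rangle = |w_0\rangle \to \cdots \to |w_{d_L}\rangle$. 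A short parity case analysis on $N_1/d$ and $N_2/d$ (with $d = gcd(N_1,N_2)$) shows $d_L \leq 2$, and $|w_{d_L}\rangle = |s\rangle$ up to a global phase because the only eigenvector of $L$ in $\mathcal{S}$ with eigenvalue $0$ is $|s\rangle$. Reversing the circuit then produces the desired algorithm mapping $|s\rangle$ to $|m\rangle$.

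For the complexity, the crucial observation is that even without vertex transitivity the overlap ratios telescope:
\[\prod_{k=0}^{d_L - 1}\frac{1}{|\langle w_k|w_{k+1}\rangle|} = \sqrt{\frac{\sum_{\lambda_i\in\Lambda_0}\alpha_i^2}{\sum_{\lambda_i\in\Lambda_{d_L}}\alpha_i^2}} = \frac{1}{|\langle s|m\rangle|} = \sqrt{N_1+N_2}.\]
Multiplying by the $2^{d_L - 1} \leq 2$ overhead from the recursive doubling in Lemma~\ref{lm3} yields the claimed $O(\sqrt{N_1+N_2})$ bound on the number of calls to $e^{-i\theta|m\rangle\langle m|}$. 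The main obstacle is precisely the loss of vertex transitivity, which breaks Eq.~(\ref{e14}); the three-dimensional reduction rescues the argument, because once $|m\rangle$ is pinned to only three eigenspaces and $|\langle s|m\rangle| = 1/\sqrt{N_1+N_2}$ is written down explicitly, the telescoping identity recovers the same $1/|\langle s|m\rangle|$ scaling that drove the vertex-transitive proof.
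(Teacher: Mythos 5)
Your argument is essentially correct, but it takes a genuinely different route from the paper. The paper abandons the Laplacian here and walks under the adjacency matrix $A$, whose nonzero eigenvalues $\pm\sqrt{N_1N_2}$ make $e^{-iA\pi/\sqrt{N_1N_2}}$ act as the exact reflection $I-2|s_1\rangle\langle s_1|$ about the uniform state over $V_1$ on a \emph{two}-dimensional subspace; a single application of Lemma~\ref{l1} with $\langle m|s_1\rangle=1/\sqrt{N_1}$ then finishes in $O(\sqrt{N_1})$ calls, and the two sides are combined for $O(\sqrt{N_1}+\sqrt{N_2})=O(\sqrt{N_1+N_2})$. You instead keep the Laplacian and rescue the full Theorem~\ref{th1} machinery: your three-dimensional invariant subspace, the verification that only the integer eigenvalues $\{0,N_2,N_1+N_2\}$ carry amplitude, the bound $d_L\le 2$, and the telescoping of the overlaps in place of the vertex-transitivity identity~\eqref{e14} are all correct (I checked $L|s^{\perp}\rangle=(N_1+N_2)|s^{\perp}\rangle$ and $L(|m\rangle-\tfrac{1}{\sqrt{N_1}}|s_1\rangle)=N_2(|m\rangle-\tfrac{1}{\sqrt{N_1}}|s_1\rangle)$, and the product of $2^{d_L-1}$ with $1/|\langle s|m\rangle|$ indeed gives $O(\sqrt{N_1+N_2})$). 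What your route buys is a demonstration that the recursive framework survives the loss of vertex transitivity whenever the amplitudes $\alpha_i$ can be computed explicitly; what the paper's route buys is a much shorter proof with a smaller constant (one round of Lemma~\ref{l1}, no factor $2^{d_L-1}$) at the price of stepping outside the integer-eigenvalue Laplacian formalism.

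Two loose ends you should tie up. First, your ``without loss of generality $m\in V_1$'' hides a real algorithmic issue: the set $\Lambda_0$ (hence the walk times $\pi/gcd(\Lambda_k)$), the overlaps $|\langle w_k|w_{k+1}\rangle|$, and therefore the rotation angles supplied to Lemma~\ref{l1} all depend on which part contains $m$, and this is not known in advance. You must do what the paper does explicitly: run the circuit twice, once with the parameters tuned for $m\in V_1$ and once for $m\in V_2$, and verify the measured vertex; since each branch costs $O(\sqrt{N_1+N_2})$ in your accounting, the total is unchanged. Second, when $N_1=1$ (or $N_2=1$, the star-graph case with $m$ the center) the vector $|m\rangle-\tfrac{1}{\sqrt{N_1}}|s_1\rangle$ vanishes, your subspace degenerates to two dimensions, and the condition of Definition~\ref{de1}(\romannumeral 2) that both $\Lambda_k$ and $\overline{\Lambda}_k$ carry nonzero amplitude must be rechecked; the argument still goes through with $\Lambda_0=\{0,N_1+N_2\}$ and $d_L=1$, but it is a separate case, not an instance of the generic one.
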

\begin{proof}
	For a complete bipartite graph $K(N_1, N_2)$, its adjacency matrix $A$ has three distinct eigenvalues: 0, $\sqrt{N_1N_2}$, $-\sqrt{N_1N_2}$. The algebraic multiplicity of 0 is $N_1+N_2-2$, and both $\sqrt{N_1N_2}$ and $-\sqrt{N_1N_2}$ are simple eigenvalues. Moreover,  $A$ has the following spectral decomposition
	\begin{equation}\label{e25}
		\begin{aligned}
	    A&=\sqrt{N_1N_2}|\eta_+\rangle \langle \eta_+| -\sqrt{N_1N_2}|\eta_-\rangle \langle \eta_-|\\
		&+\sum_{i=1}^{N_1+N_2-2}0|\eta_i\rangle \langle \eta_i|, 
		\nonumber
	\end{aligned}
	\end{equation}
	where 
	\begin{equation} \label{e26}
		\begin{aligned}
			&\ |\eta_+\rangle=\frac{1}{\sqrt{2N_1N_2}}(\underbrace  {\sqrt{N_2},\dots,\sqrt{N_2}}_{N_1},\underbrace{\sqrt{N_1},\dots,\sqrt{N_1}}_{N_2})^T,\\	&\ |\eta_-\rangle=\frac{1}{\sqrt{2N_1N_2}}(\underbrace  {\sqrt{N_2},\dots,\sqrt{N_2}}_{N_1},\underbrace{-\sqrt{N_1},\dots,-\sqrt{N_1}}_{N_2})^T.
			\nonumber
		\end{aligned}
	\end{equation}
	 We rewrite the marked vertex on this basis as
	\begin{equation}
		|m\rangle=\sum_{i=1}^{N_1+N_2-2}\alpha_i |\eta_i\rangle+\alpha_+|\eta_+\rangle+\alpha_-|\eta_-\rangle.
		\nonumber
		\label{e27}
	\end{equation}
	If the marked vertex is in $V_1$, then 
 \[\langle\eta_+|m\rangle=\langle\eta_-|m\rangle=\frac{1}{\sqrt{2N_1}},\]
	and
	\begin{equation}
		|m\rangle=\sum_{i=1}^{N_1+N_2-2}\alpha_i |\eta_i\rangle+\frac{1}{\sqrt{2N_1}}|\eta_+\rangle+\frac{1}{\sqrt{2N_1}}|\eta_-\rangle.
		\nonumber
		\label{e28}
	\end{equation}
	Define 
	\begin{equation}
		\begin{aligned}
			&\	|\eta_0\rangle=\frac{1}{\sqrt{\sum_{i=1}^{N_1+N_2-2}\alpha_i^2}}\sum_{i=1}^{N_1+N_2-2}\alpha_i |\eta_i\rangle, \\
			&\	|s\rangle=\frac{1}{\sqrt{2}}(|\eta_+\rangle+|\eta_-\rangle)=\frac{1}{\sqrt{N_1}}(1,1,\cdots,1,0,\cdots,0).
			\nonumber
			\label{e29}
		\end{aligned}
	\end{equation}
	We can see $|m\rangle \in span\{|\eta_0\rangle,|s\rangle\}$, and in this subspace,
	\begin{equation}
		e^{-iA\frac{\pi}{\sqrt{N_1N_2}}}=I-2|s\rangle \langle s|.
		\nonumber
		\label{e30}
	\end{equation}
	Obviously, $\langle m|s\rangle=\frac{1}{\sqrt{N_1}}$. Thus,  by Lemma~\ref{l1} we can find parameters $p \in O(\sqrt{N_1})$, $\gamma$, $\theta_k$ $(k\in \{1,2,\dots,p\})$ such that
	\begin{equation}
		|m\rangle=e^{-i\gamma}\prod_{k=1}^{p}e^{-iA\frac{\pi}{\sqrt{N_1N_2}}}e^{-i\theta_k |m\rangle\langle m|}|s\rangle.
		\nonumber
		\label{e31}
	\end{equation} 
	
 Hence, we get a deterministic quantum search algorithm that uses $O({\sqrt{N_1}})$ calls to the search operator and finds the marked vertex from the uniform superposition state over all vertices in $V_1$. Similarly, if the marked vertex is in $V_2$, we can construct a deterministic quantum search algorithm that finds the marked vertex from the uniform superposition state over all vertices in $V_2$ and has $O({\sqrt{N_2}})$ calls to the search operator. Combining the two algorithms above, we can get a quantum algorithm that has $O({\sqrt{N_1}+\sqrt{N_2}})$ calls to the search operator and finds the marked vertex exactly. Since ${\sqrt{N_1}+\sqrt{N_2}}\leq \sqrt{2N_1+2N_2}$, the algorithm has $O(\sqrt{N_1+N_2})$ calls to the search operator 
	and the theorem is proved completely.
\end{proof}

	\textit{Conclusion and Discussion}.--- 
In this article, we have presented a universal approach that can be used to design deterministic quantum algorithms for spatial search on a variety of graphs based on alternating phase-walks. Using this approach, we have obtained deterministic quantum search algorithms with quadratic speedups on Johnson graphs $J(n,k)$ for any fixed $k$, rook graphs, complete-square graphs, and complete bipartite graphs, which not only cover the previous results obtained in \cite{RN5,RN6,RN8}, but also result in some new and general results. Our algorithms are concise and easy to understand, which are simply to perform alternately the continuous-time quantum walk operator $e^{-iHt}$ and the search operator $e^{-i\theta |m\rangle\langle m|}$ on the initial state $|s\rangle$.

For future work, we will consider generalizing our approach to more graphs and designing algorithms for the case of multiple marked vertices.
	
	\bibliographystyle{apsrev4-1}
	\bibliography{references.bib}
	
	

\end{document}